\renewcommand\l@subsection{\@tocline{2}{0pt}{2pc}{5pc}{}}
\newcommand{\R}{{\mathbb R}}
\newcommand{\curl}{\operatorname{curl}}
\newcommand{\divr}{\operatorname{div}}
\newcommand{\Diff}{\operatorname{Diff}}
\newcommand{\lk}{\operatorname{lk}}
\newcommand{\Hel}{\mathcal{H}}
\newcommand{\bB}{\mathbf{B}}
\newcommand{\bA}{\mathbf{A}}
\newcommand{\bL}{\mathsf{L}}
\newcommand{\bW}{\mathsf{W}}
\newcommand{\bM}{\mathsf{M}}
\newcommand{\bH}{\mathsf{H}}
\newcommand{\bC}{\mathsf{C}}
\newcommand{\no}{\noindent}
\newtheorem*{rep@theorem}{\rep@title}
\newcommand{\newreptheorem}[2]{%
	\newenvironment{rep#1}[1]{%
		\def\rep@title{#2 \ref{##1}}%
		\begin{rep@theorem}}%
		{\end{rep@theorem}}}
\theoremstyle{plain}
\newtheorem{thmp}{Theorem}[section] 
\theoremstyle{definition}
\newtheorem{def/ex}[thm]{Definition/Example}
\newtheorem{ques}{Question}
\newtheorem{rems}{Remark}
\begin{document}

%%%%%%%%%%%%%%%%%%%%%%%%%%%%%%%%%%%%%%%%%%%%%%

\title[ Woltjer's force free minimizers and Moffatt's magnetic relaxation]{On Woltjer's force free  minimizers and Moffatt's \\ magnetic relaxation}

\date{\today}

%%%%%%%%%%%%%%%%%%%%%%%%%%%%%%%%%%%%%%%%%%%%%%

\author{R. Komendarczyk}

\dedicatory{Dedicated to the memory of S{\l}awek Kwasik (1953--2021)}

\address{Department of Mathematics, Tulane University, 6823 St. Charles Ave, New Orleans, LA 70118}
\email{rako@tulane.edu}
\urladdr{dauns01.math.tulane.edu/\textasciitilde rako}

\subjclass[2010]{Primary: 35Q35; Secondary: 58D05, 76W05}
%\keywords{Braids, loop spaces, cobar construction, configuration space integrals, iterated integrals, formality, graph complexes, trivalent graphs, chord diagrams}

%%%%%%%%%%%%%%%%%%%%%%%%%%%%%%%%%%%%%%%%%%%%%%%%%%%%%%%%%%%%%%%%%%%%%%%%%%%%%%%%%%%%%%%%%%%%%%%%%%%%%%%%

\begin{abstract}
In this note, we exhibit a situation where a stationary state of Moffatt's ideal magnetic relaxation problem is different than the corresponding force-free $L^2$ energy minimizer of Woltjer's variational principle. Such examples have been envisioned in Moffatt's seminal work on the subject and involve divergence free vector fields supported on collections of essentially linked magnetic tubes. Justification of Moffatt's examples requires the strong convergence of a minimizing sequence.  What is proven in the current note is that there is a gap between the global minimum ({\em Woltjer's minimizer}) and the minimum over the weak $L^2$ closure of the class of vector fields obtained from a topologically non-trivial field by energy-decreasing diffeomorphisms. In the context of Taylor's conjecture, our result shows that the Woltjer's minimizer cannot be reached during the viscous MHD relaxation in the perfectly conducting magneto-fluid if the initial field has a nontrivial topology. The result also applies beyond Moffatt's relaxation to any other relaxation process which evolves a divergence free field by means of energy-decreasing diffeomorphisms, such processes were proposed by Vallis et.al and more recently by Nishiyama.
\end{abstract}

%%%%%%%%%%%%%%%%%%%%%%%%%%%%%%%%%%%%%%%%%%%%%%%%%%%%%%%%%%%%%%%%%%%%%%%%%%%%%%%%%%%%%%%%%%%%%%%%%%%%%%%%

\maketitle

%\tableofcontents
%\baselineskip=13pt
%\parskip=4pt
%\parindent=0cm

%%%%%%%%%%%%%%%%%%%%%%%%%%%%%%%%%%%%%%%%%%%%%%%%%%%%%%%%%%%%%%%%%%%%%%%%%%%%%%%%%%%%%%%%%%%%%%%%%%%%%%%%%%%
\section{Introduction}\label{sec:intro}
The Woltjer's variational problem \cite{Woltjer:1958}, known in the context of hydrodynamics and magnetohydrodynamics \cite{Arnold:1986, Arnold:1998, Berger:1984,Cantarella:2000,Boulmezaoud:2000,Enciso:2012, Taylor:1986,Moffatt:1990, Moffatt:2015, Maggioni:2009, Russell:2015, Oberti-Ricca:2018}, concerns the minimization of the $L^2$--energy $E(\mathbf{B})=\int_\Omega \|\mathbf{B}(\mathbf{x})\|^2 d\mathbf{x}$ of over the subspace of divergence free vector fields defined on a regular domain $\Omega$ subject to a helicity constraint. Various boundary conditions, depending on the topology of $\Omega$, can be imposed,  we refer to \cite{Laurence:1991} for further details, here we consider the case of a simply connected domain $\Omega$ with smooth connected boundary. %\footnote{Woltjer's approach used an additional restrictive assumption which has been later removed in \cite{Laurence:1991}.}. 
The formal analysis, presented in \cite{Laurence:1991}, begins with the space 
\begin{equation}\label{eq:W(Omega)}
\bL^2_{\curl}(\Omega)=\{\mathbf{B}\in \bL^2(\Omega)\ |\ \mathbf{B}=\curl(\mathbf{A}),\ \divr(\mathbf{B})=0,\ \mathbf{n}\cdot\mathbf{B} =0\ \text{on}\ \partial \Omega,\ \mathbf{A}\in \bL^2(\Omega)\},
\end{equation}
(where the derivatives are understood in the weak sense, \cite{Gilbarg:2001}) and seeks minimizers of $E(\mathbf{B})$ subject to the constraint:
\[
\Hel(\mathbf{B})=\int_{\Omega} \mathbf{B}\cdot\mathbf{A}\, d\mathbf{x}=(\mathbf{B},\mathbf{A})_{\bL^2}=c_\ast,\qquad \mathbf{B}=\curl(\mathbf{A}),\qquad c_\ast=\text{const}.
\]
The quantity $\Hel(\mathbf{B})$ is called the {\em helicity} of the field $\mathbf{B}$, \cite{Arnold:1986, Arnold:1998, Moffatt:1985, Woltjer:1958} and is an  invariant of $\mathbf{B}$, under the volume preserving deformations i.e. $\Hel(\mathbf{B})=\Hel(f_\ast\mathbf{B})$ for any $f\in \Diff_{0}(\Omega,d\mathbf{x})$ of class $C^1$ (i.e. volume preserving diffeomorphisms, which are equal to the identity along the boundary of $\Omega$). For further reference we state the Woltjer's problem, as follows
\begin{equation}\label{eq:woltjer-var-problem}
\text{minimize $E(\mathbf{B})$  over }\ \bW_{c_\ast}(\Omega)=\{\mathbf{B}\in \bL^2_{\curl}(\Omega)\ |\ \Hel(\mathbf{B})=c_\ast\}.
\end{equation}
\no As shown in \cite{Laurence:1991},  the minimizer $\overline{\mathbf{B}}$ exists and satisfies
$\curl(\overline{\mathbf{B}})=\lambda\overline{\mathbf{B}}$, $\lambda\in \mathbb{R}$,
i.e. $\overline{\mathbf{B}}$ is an eigenfield of the operator $\curl$, and therefore a smooth classical solution of the Euler equations: $\mathbf{B}\cdot\nabla \mathbf{B}=-\nabla P$, where $\nabla\cdot \mathbf{B}=0$, and $P=-\tfrac 12 \|\mathbf{B}\|^2+\operatorname{const}$.

In \cite{Moffatt:1985},  Moffatt considered the following evolution equations
of a viscous and perfectly conductive magneto-fluid
%%%%%%%%%%%%%%%%%%%%%%%%%
\begin{align}\label{eq:MHD-equation-1}
& \rho\bigl(\partial_t \mathbf{v} +\mathbf{v}\cdot\nabla \mathbf{v}\bigr)  = -\nabla p+\curl(\mathbf{B})\times \mathbf{B} +\mu \nabla^2 \mathbf{v},\\
& \label{eq:MHD-equation-2} \partial_t \mathbf{B} =\curl(\mathbf{v}\times \mathbf{B}),\quad \nabla\cdot \mathbf{v}  =\nabla\cdot \mathbf{B} = 0,\\
& \label{eq:MHD-equation-3} \mathbf{B}(\mathbf{x},0)= \mathbf{B}_0(\mathbf{x}),\quad \mathbf{v}(\mathbf{x},0)=0,\\
& \label{eq:bdry-conditions} \mathbf{n}\cdot\mathbf{B} =0,\quad \mathbf{v} = 0,\quad \text{on}\ \partial \Omega,
\end{align}
%%%%%%%%%%%%%%%%%%%%%%%%%
where $\rho$ is the fluid density (assumed uniform), $\mu$ viscosity (in \cite{Moffatt:1985} it is assumed sufficiently large when compared with Reynolds number associated with the flow) $p(\mathbf{x},t)$ is the pressure field.
The second equation in \eqref{eq:MHD-equation-2} assures that $\mathbf{B}_t=\mathbf{B}(\mathbf{x},t)$ is transported with the flow $\phi_{\mathbf{v}}$ of $\mathbf{v}$, i.e.
\begin{equation}\label{eq:frozen-in}
\mathbf{B}(\mathbf{x},t)=\phi_{\mathbf{v}}(\mathbf{x},t)_\ast \mathbf{B}_0(\phi_{\mathbf{v}}(\mathbf{x},-t)),
\end{equation}
where $\phi_{\mathbf{v}}(\mathbf{x},t)_\ast$ denotes a pushforward of the field under the diffeomorhism $\phi_{\mathbf{v}}$. Moffatt further shows that  as long as $\mathbf{v}\neq 0$, the $L^2$ energy of $\mathbf{v}$ and $\mathbf{B}$ decreases as $t\to \infty$, by means of the following formula
\begin{equation}\label{eq:energy-derivative}
 \frac{d}{dt}\bigl(E(\mathbf{B}_t)+\int_\Omega \rho \|\mathbf{v}(\mathbf{x},t)\|^2 d\mathbf{x}\bigr)=-2 \int_\Omega \mu \|\curl(\mathbf{v}(\mathbf{x},t))\|^2 d\mathbf{x}.
\end{equation}
Further, he asserts that a minimizing sequence $\mathbf{B}_t$, $t\to\infty$  (c.f. \cite{Childress:2004, Nishiyama:2002}) should yield a stationary state $\bB=\bB_\infty$ satisfying the Euler's equations.  In \cite{Nishiyama:2002}, Nishiyama observes that a rigorous justification of convergence of $\mathbf{B}_t$ to the stationary state is problematic due to the perfect conductivity of the magnetofluid, and introduces, guided by Vallis et.al, \cite{Vallis:1989}, an alternative to \eqref{eq:MHD-equation-1}--\eqref{eq:bdry-conditions} system which admits a measure-valued solution in the sense of  DiPerna and Majda \cite{DiPerna:1987}.

Since the relaxation of the field $\bB_0$ according to  $\eqref{eq:MHD-equation-1}$--$\eqref{eq:bdry-conditions}$ decreases its energy, a general question studied by Moffatt in \cite{Moffatt:1985} is as follows
\begin{ques}\label{q:main_quest}
	Is a stationary state $\bB_\infty$ of the problem $\eqref{eq:MHD-equation-1}$--$\eqref{eq:bdry-conditions}$ (provided it exists)  the same as the corresponding minimizer in the Woltjer's variational problem \eqref{eq:woltjer-var-problem}?
\end{ques}

\begin{figure}[ht]
	\centering
	\begin{picture}(280,140)% width and height of the picture
	\put(0,0){\includegraphics[width=0.6\linewidth]{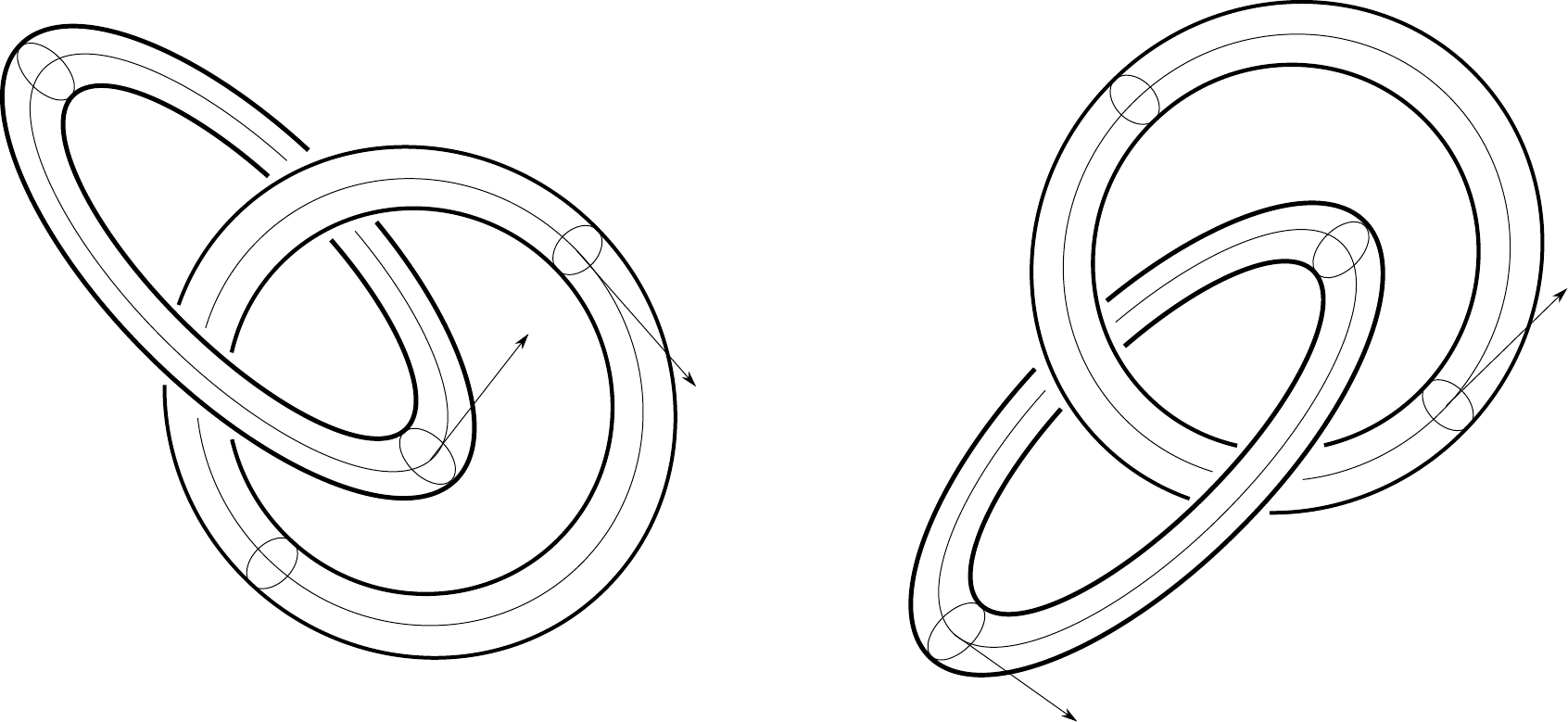}}
	\put(10,50){$\bB^+_h$} 
	\put(15,130){$\mathcal{T}^+_1$}
	\put(105,95){$\mathcal{T}^+_2$}
	
	\put(275,50){$\bB^-_h$}
	\put(145,10){$\mathcal{T}^-_1$}
    \put(240,135){$\mathcal{T}^-_2$}
	\end{picture}
	\caption{Vector field $\bB_h=\bB^+_h+\bB^-_h$ modeled on two Hopf links: $L^+_h$, $L^-_h$ with opposite linking numbers has zero helicity, supported on the tubes $\{\mathcal{T}^\pm_1,\mathcal{T}^\pm_2\}$.}\label{fig:Bh}
\end{figure}
As illustrated in \cite{Moffatt:1985}, one expects the minimizers to be different. In particular, in the case of fields with zero helicity, the force free  minimizer of \eqref{eq:woltjer-var-problem} is the zero field, however a nontrivial topology of the initial field $\bB_0$ can be still prevent a complete energy relaxation. The easiest examples where this situation occurs are the vector fields modeled on essential links and knots in $\R^3$ (see Appendix), Figure \ref{fig:Bh} shows an example of a field modeled on the pair of Hopf links. In \cite{Moffatt:1985}, among other examples, Moffatt considers the field modeled on Borromean rings $\bB_{Borr}$ and observes that the energy $E(\bB_{Borr})$ cannot be decreased to zero under \eqref{eq:MHD-equation-2}, \eqref{eq:frozen-in} thanks to the lower bound of Freedman and He in \cite{Freedman:1991}  (see further remarks after the proof of Theorem \ref{thm:main}). These considerations however require a strong $L^2$ convergence of a minimizing sequence $\mathbf{B}_t$ obtained from \eqref{eq:energy-derivative}, for $t\to \infty$, which is in general problematic as discussed above.
 
In order to circumvent these problems, we observe that any minimizing sequence $\{\mathbf{B}_t\}$ obtained from \eqref{eq:energy-derivative} always has a weakly convergent subsequence.
Thus we consider a variational problem associated with  \eqref{eq:MHD-equation-1}--\eqref{eq:bdry-conditions} which, asks to minimize the $L^2$--energy; $E$  over the subset
 \begin{equation}\label{eq:bar-M-w(Omega,B_0)}
 \begin{split}
 \overline{\bM}^w(\Omega,\mathbf{B}_0) & =\text{the weak $L^2$ closure of} \\
 & \bM(\Omega,\mathbf{B}_0) =\{\mathbf{B}\ |\ \mathbf{B}=f_\ast\mathbf{B}_0, f\in \Diff_{0}(\Omega,d\mathbf{x}); E(\mathbf{B})\leq E(\mathbf{B}_0)\}\subset \bL^2_{\curl}(\Omega),
 \end{split}
 \end{equation}
of divergence free fields obtained from $\mathbf{B}_0\in \bL^2_{\curl}(\Omega)$ via pushforwards by volume preserving diffeomorphisms of $\Omega$ which are identity when restricted to $\partial \Omega$ (the $C^1$ diffeomorphisms are denoted by $\Diff_{0}(\Omega,d\mathbf{x})$). This is consistent with \eqref{eq:MHD-equation-2}, since every vector field in $\bM(\Omega,\mathbf{B}_0)$ has the same topology as the initial field $\mathbf{B}_0$, and \eqref{eq:MHD-equation-2} simply defines a path\footnote{for the perfectly  conducting magneto--fluid the long time existence is not known \cite{Nishiyama:2002}.} in $\bM(\Omega,\bB_0)$. A clear difference with Woltjer's problem is that the helicity constraint provides only a ``mild'' restriction on a topology of a field, whereas vector fields in $\bM(\Omega,\bB_0)$ have equivalent topology to the initial $\bB_0$. 
 
 In order to put it in the general setting, recall that a usual variational problem asks to minimize a weakly lower semicontinuous functional $E$ over a weakly compact class of functions $W$, \cite{Ekeland:1999, Laurence-Stredulinsky:2000}. One then considers a minimizing sequence $f_n\in W$ weakly convergent to $f\in W$, then $E(f)\leq \lim\inf E(f_n)$. By the extreme value theorem for the weakly lower semicontinuous functions, $f$ is a minimizer of $E$ over $W$. Note that $\bW_{c_\ast}(\Omega)$ in \eqref{eq:woltjer-var-problem}, is  weakly closed in $\bL^2(\Omega)$ \cite[p. 1237]{Laurence:1991}. Since $\bM(\Omega,\mathbf{B}_0)$ is not weakly closed, we consider its weak $\bL^2$ closure  $\overline{\bM}^w(\Omega,\mathbf{B}_0)$ and ask to  
 \begin{equation}\label{eq:moffatt-var-problem}
 \text{minimize $E(\mathbf{B})$  over }\ \overline{\bM}^w(\Omega,\mathbf{B}_0).
 \end{equation}
This formulation meets the requirements of the usual variational problem.
  Clearly, the caveat of replacing $\bM(\Omega,\mathbf{B}_0)$ by $\overline{\bM}^w(\Omega,\mathbf{B}_0)$ is that the field line topology is no longer preserved and it is possible that
 \[
   \inf_{\mathbf{B}\in\overline{\bM}^w(\Omega,\mathbf{B}_0)} E(\mathbf{B}) < \inf_{\mathbf{B}\in\bM(\Omega,\mathbf{B}_0)} E(\mathbf{B}),
 \]
examples of paths in $\bM(\Omega,\mathbf{B}_0)$ where this possibility is realized were constructed in \cite{Laurence-Stredulinsky:2000}. However, the relaxation \eqref{eq:MHD-equation-1}--\eqref{eq:bdry-conditions}, under the assumption of the long time existence of classical solutions, always takes place in $\bM(\Omega,\mathbf{B}_0)$, therefore the field line topology is preserved for ``all time'' except in the limit where it can be lost. Moffatt \cite{Moffatt:1985} defines fields {\em topologically accessible} from $\mathbf{B}_0$ (i.e. limits of minimizing sequences $\mathbf{B}_t$ obtaind from \eqref{eq:MHD-equation-2}, \eqref{eq:energy-derivative}), here we simply allow such fields to be all possible weak $L^2$ limits of fields from $\bM(\Omega,\mathbf{B}_0)$. Equivalently, one could say that under Moffatt's assumption, topologically accessible fields are in the strong $L^2$ closure of $\bM(\Omega,\mathbf{B}_0)$, whereas in general we may only show that they  are in the weak $L^2$ closure $\overline{\bM}^w(\Omega,\mathbf{B}_0)$ of $\bM(\Omega,\mathbf{B}_0)$. Theorem \ref{thm:main}, presented in this work, shows that the global minimizer ({\em Woltjer's minimizer}) is not in $\overline{\bM}^w(\Omega,\mathbf{B}_0)$ for a field $\mathbf{B}_0$ with a nontrivial linkage of flowlines.

\begin{rems} We also note that in \cite{Freedman:1999} a rotational magnetic field $\bB_Z$ in the round ball $\Omega=B^3\subset\R^3$ ({\em Zeldovich's neutron star}) is considered and a path $\bB_t$ in $\bM(\Omega,\mathbf{B}_Z)$ constructed such that $\bB_t\to 0$, as $t\to\infty$ in $\bL^2(\Omega)$, however  $\bB_t\not\to 0$ in $\bL^\infty(\Omega)$, which demonstrates that the minimizers may be highly irregular (see also \cite{Arnold:1998}).
\end{rems}

Our notation for the function spaces in the next section is as follows: $\bC^\infty_{0,\divr}(\Omega)$ smooth (test) divergence free compactly supported vector fields on $\Omega$,
$\bL^2(\Omega)=\{X\ |\ X \in L^2(\Omega)\times L^2(\Omega)\times L^2(\Omega)\}$, the square integrable vector fields, 
$\bH^1(\Omega)=\{X\ |\ X \in H^1(\Omega)\times H^1(\Omega)\times H^1(\Omega)\}$ the Sobolev space of $L^2$ vector fields with $L^2$ weak derivatives. 

%%%%%%%%%%%%%%%%%%%%%%%%%%%%%%%%%%%%%%%%%%%%%%%%%%%%%%%%%%%%%%%%%%%%%%%%%%%%%%%%%%%%%%%%%%%%%%%%%%%%%%%%%%%
\section{Statement of the result}\label{sec:statement}
 In a simply connected domain $\Omega\subset \R^3$ with smooth connected boundary, let us consider as an initial vector field $\bB_h$ (i.e. $\bB_0=\bB_h$) modeled on a pair of Hopf links $L=L^+_h \cup L^-_h$ (see Appendix \ref{sec:fields-modeled}) as shown in Figure \ref{fig:Bh}  and supported on the tubes $\{\mathcal{T}^\pm_1,\mathcal{T}^\pm_2\}$. It follows from the well known flux helicity formula \cite{Moffatt:1985}, that the total helicity of $\bB_h$ is zero (we review this in the proof below) and therefore the force free minimizer of \eqref{eq:woltjer-var-problem} is zero, \cite{Laurence:1991}. 

\begin{thmp}\label{thm:main}
	For the initial field $\bB_0=\bB_h$, a minimizer of the problem \eqref{eq:moffatt-var-problem}  is a nonzero field in $\bL^2(\Omega)$.
\end{thmp}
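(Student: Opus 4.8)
The plan is to show that the infimum of $E$ over $\overline{\bM}^w(\Omega,\bB_h)$ is strictly positive, which is equivalent to the assertion. Indeed, since every field in $\bM(\Omega,\bB_h)$ has energy at most $E(\bB_h)$, the set $\overline{\bM}^w(\Omega,\bB_h)$ is bounded in $\bL^2(\Omega)$ and weakly closed, hence weakly compact; the weakly lower semicontinuous functional $E$ therefore attains a minimum $\overline{\bB}$ on it. Were the minimum value $0$, weak lower semicontinuity would give $E(\overline{\bB})=0$, i.e. $\overline{\bB}=0\in\overline{\bM}^w(\Omega,\bB_h)$. Thus it suffices to prove that $0$ is \emph{not} a weak $\bL^2$ limit of fields in $\bM(\Omega,\bB_h)$.

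The single analytic ingredient I would isolate first is the weak continuity of helicity. Writing $\Hel(\bB)=(\bB,\bA)_{\bL^2}$ with $\bA=\curl^{-1}\bB$ the Biot--Savart potential, one uses that $\curl^{-1}\colon \bL^2(\Omega)\to\bH^1(\Omega)$ is bounded and that the embedding $\bH^1(\Omega)\hookrightarrow\bL^2(\Omega)$ is compact; hence $\curl^{-1}$ is compact on $\bL^2(\Omega)$ and sends weakly convergent sequences to strongly convergent ones. Consequently, if $\bB_n\rightharpoonup\bB$ weakly then $\Hel(\bB_n)\to\Hel(\bB)$. I also record that $\Hel$ is a quadratic form, so $\Hel(-\bB)=\Hel(\bB)$.

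The heart of the argument exploits the product structure of $\bB_h$. Decompose $\bB_h=\bB^+_h+\bB^-_h$, where $\bB^\pm_h$ is supported on the tubes $\{\mathcal{T}^\pm_1,\mathcal{T}^\pm_2\}$ modeling $L^\pm_h$; these supports are disjoint, and the two sublinks carry opposite linking numbers $\pm1$. By the flux--helicity formula each piece has a nonzero self-helicity of opposite sign, $\Hel(\bB^\pm_h)=\pm h_0$ with $h_0=2\Phi_1\Phi_2\neq 0$ (here $\Phi_1,\Phi_2$ are the fluxes through the two tubes), while the mutual term vanishes, so that $\Hel(\bB_h)=0$ as noted above. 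Now take any $\bB_n=(f_n)_\ast\bB_h\in\bM(\Omega,\bB_h)$ and suppose, for contradiction, that $\bB_n\rightharpoonup 0$. Since $f_n$ is a diffeomorphism, the images $(f_n)_\ast\bB^+_h$ and $(f_n)_\ast\bB^-_h$ still have disjoint supports, so energies add, $E(\bB_n)=E((f_n)_\ast\bB^+_h)+E((f_n)_\ast\bB^-_h)$, and each summand is bounded by $E(\bB_h)$. Passing to a subsequence, $(f_n)_\ast\bB^\pm_h\rightharpoonup \mathbf{g}^\pm$ in $\bL^2(\Omega)$, whence $\mathbf{g}^++\mathbf{g}^-=0$, i.e. $\mathbf{g}^-=-\mathbf{g}^+$. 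Because helicity is invariant under volume-preserving diffeomorphisms and weakly continuous, $\Hel(\mathbf{g}^\pm)=\lim_n\Hel((f_n)_\ast\bB^\pm_h)=\Hel(\bB^\pm_h)=\pm h_0$. On the other hand $\Hel(\mathbf{g}^-)=\Hel(-\mathbf{g}^+)=\Hel(\mathbf{g}^+)$, so $-h_0=h_0$, contradicting $h_0\neq 0$. Hence $0\notin\overline{\bM}^w(\Omega,\bB_h)$ and the minimizer is nonzero.

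The main obstacle I anticipate is a careful justification of the two facts that drive the proof: the weak continuity of $\Hel$ (i.e. compactness of Biot--Savart on the simply connected domain $\Omega$ with the stated boundary conditions) and the verification that each pushed-forward sub-field $(f_n)_\ast\bB^\pm_h$ lies in $\bL^2_{\curl}(\Omega)$ and retains its self-helicity, so that the invariance and weak continuity of $\Hel$ may be applied to the two pieces separately. The disjoint-support additivity of energy, which keeps the individual pieces bounded in $\bL^2(\Omega)$, is the structural observation that permits their separation in the weak limit; the remaining force of the argument is simply that the even (quadratic) nature of $\Hel$ cannot reconcile the opposite self-helicities, producing the sign contradiction. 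I note that, unlike the Borromean example where one must invoke the Freedman--He bound, here the opposite-signed Hopf sublinks make this elementary helicity argument sufficient, which is exactly why $\bB_h$ is the easiest instance of the phenomenon.
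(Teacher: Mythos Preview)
Your argument is correct and follows essentially the same route as the paper: the same $\bB^{+}_{h}/\bB^{-}_{h}$ decomposition, the same disjoint-support energy bound to extract weak sublimits, and the same analytic core (Friedrichs inequality plus Rellich compactness) to upgrade potentials to strong $\bL^{2}$ convergence and obtain weak continuity of helicity on each piece. The only difference is the final contradiction: the paper computes $(\bB^{+}_{h,n},\bA_{h,n})_{\bL^{2}}$ two ways using the vanishing cross-helicity, whereas you use the parity identity $\Hel(-\mathbf{g}^{+})=\Hel(\mathbf{g}^{+})$ together with $\mathbf{g}^{-}=-\mathbf{g}^{+}$, which is a slightly cleaner way to reach the same sign clash.
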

\no Before presenting the proof, let us look closer at the construction of the divergence vector field $\bB_h=$ with zero total helicity but nonzero subhelicites, i.e. $\Hel(\bB_h)=0$ and $\Hel(\bB^+_h)=1$, $\Hel(\bB^-_h)=-1$. Figure \ref{fig:Bh} illustrates the field supported on the tubes $\{\mathcal{T}^\pm_1,\mathcal{T}^\pm_2\}$ about a $4$--component link $L_h=\{L^+_h,L^-_h\}$, which is a disjoint union of two Hopf links $L^+_h=(L^{1,+}_h,L^{2,+}_h)$ and $L^-_h=(L^{1,-}_{h},L^{2,-}_h)$ with opposite linking numbers i.e. 
\begin{equation}\label{eq:linking-L_h}
 \lk(L^+_h)=\lk(L^{1,+}_h,L^{2,+}_h)=1,\qquad \text{and}\qquad \lk(L^-_h)=\lk(L^{1,-}_{h},L^{2,-}_h)=-1. 
\end{equation}
 Let $\Omega$ be a simply connected domain in $\R^3$ with smooth boundary $\partial \Omega$. We set  $\mathbf{B}_h$ to be the divergence free vector field modeled on $L_h$ as defined in Appendix \ref{sec:fields-modeled} supported on the disjoint tubes around the link $L_h$. Restricting $\mathbf{B}_h$ to each individual tube we obtain
%%%%%%%%%%%%%%%%%%%%%%%%%%%%%%%%%%
\begin{equation}\label{eq:B-pm-decomposition}
\mathbf{B}_h=\mathbf{B}^+_h+\mathbf{B}^-_h=(\mathbf{B}^{1,+}_h+\mathbf{B}^{2,+}_h)+(\mathbf{B}^{1,-}_h+\mathbf{B}^{2,-}_h).
\end{equation}
We may assume that the fields $\mathbf{B}^+_h$ and $\mathbf{B}^+_h$ and the supporting tubes $\{\mathcal{T}^\pm_1,\mathcal{T}^\pm_2\}$ are isometric images of each other in $\Omega\subset \R^3$ as well as  (the isometry needs to reverse the orientation in one of the tubes to obtain \eqref{eq:linking-L_h}). 

\no The above construction yields the following helicity and cross-helicity identities 
\begin{equation}\label{eq:B_h-helicity}
 \Hel(\mathbf{B}^{1,\pm}_h) =\Hel(\mathbf{B}^{2,\pm}_h)=0,\quad \Hel(\mathbf{B}^{1,\pm}_h,\mathbf{B}^{2,\pm}_h)=\lk(L^\pm_h)\Phi(\mathbf{B}^\pm_{h,1})\Phi(\mathbf{B}^\pm_{h,2}),\quad
 \Hel(\mathbf{B}^{\pm,1}_h,\mathbf{B}^{2,\mp}_h)=0.
\end{equation}

Also, without loss of generality, we may scale the fields to obtain the unit fluxes i.e. $\Phi(\mathbf{B}^\pm_{h,\ast})=1$ and $\Hel(\mathbf{B}^\pm_{h,1},\mathbf{B}^\pm_{h,2})=\pm 1$.
Further, $\Hel$ is a symmetric bilinear, thus the above identities yield,
\begin{equation}\label{eq:B_h-zero-helicity}
 \Hel(\mathbf{B}_h)=\Hel(\mathbf{B}^+_h)+\Hel(\mathbf{B}^-_h)=0.
\end{equation}

\begin{rems}
	Recall that the cross--helicity of two fields $\bB_1$ and $\bB_2$ in $\bL^2_{\curl}(\Omega)$ is defined by  
	\begin{equation}\label{eq:cross_helicity}
	 \Hel(\bB_1,\bB_2)=(\bB_1,\bA_2)_{\bL^2}=\int_{\Omega} \bB_1(\mathbf{x})\cdot\bA_2(\mathbf{x}) d\mathbf{x},\quad \bB_2=\curl(\bA_2)
	\end{equation}
	and is a symmetric bilinear form on $\bL^2_{\curl}(\Omega)$. The single field helicity $\Hel(\bB)$ equals $\Hel(\bB,\bB)$, i.e. the associated quadratic form.
\end{rems}

\begin{proof}[Proof of Theorem \ref{thm:main}]
The set $\bM(\Omega,\mathbf{B}_h)$ is bounded in $L^2$ norm, and therefore $\overline{\bM}^w(\Omega,\mathbf{B}_h)$ is weakly compact. Since $E(\,\cdot\,)$ is weakly lower semicontinuous, the extreme value theorem tells us that a minimizer $\overline{\bB}_h$ of $E$ exists over $\overline{\bM}^w(\Omega,\mathbf{B}_h)$, i.e. 
\[
E(\overline{\bB}_h)=\min_{\bB\in \overline{\bM}^w(\Omega,\bB_h)} E(\bB).
\]
By the Eberlein--Smulian Theorem \cite{Rudin:1991}, there is a sequence $\{\bB_{h,n}\}\subset \bM(\Omega,\bB_h)$, $\bB_{h,n}=f_{n,\ast} \bB_h$, $f_n\in \Diff_{0}(\Omega,d\mathbf{x})$, weakly convergent to $\overline{\bB}_h$. The vector field push--forward $f_\ast$ is linear, so the decomposition \eqref{eq:B-pm-decomposition} hold for every $n$:
\[
 \bB_{h,n}=\bB^+_{h,n}+\bB^-_{h,n}\,,\qquad \bB^\pm_{h,n}=f_{n,\ast} \bB^\pm_{h}\,.
\] 
Since the supports of $\bB^+_{h,n}$ and $\bB^-_{h,n}$ are disjoint 
\begin{equation}\label{eq:B_h-norm-bounded}
 \|\bB^\pm_{h,n}\|_{\bL^2}\leq  \|\bB_{h,n}\|_{\bL^2}\leq \|\bB_h\|_{\bL^2},
\end{equation}
i.e. sequences $\{\bB^+_{h,n}\}$, $\{\bB^-_{h,n}\}$ are bounded and therefore weakly convergent (after passing to a subsequence, if necessary), let $\bB^\pm_{h,n}\xrightharpoonup{\quad} \overline{\bB}^\pm_{h}$ clearly, $\overline{\bB}_{h}=\overline{\bB}^+_{h}+\overline{\bB}^-_{h}$. In the next step, we follow the analysis in \cite[p. 1244]{Laurence:1991}: for each 
$\bB^\pm_{h,n}$, we may choose a potential field $\bA^\pm_{h,n}$ in $\bH^1(\Omega)$, such that
\[
 \curl(\bA^\pm_{h,n})=\bB^\pm_{h,n}, 
\]
in the weak sense (i.e. for any $X\in \bC^\infty_{0,\divr}(\Omega)$:
$(\bA^\pm_{h,n},\curl(X))_{\bL^2}=(\bB^\pm_{h,n},X)_{\bL^2}$). The potential fields can be also chosen to  satisfy 
\begin{equation}\label{eq:A-div-boundary-condition}
 \divr(\bA^\pm_{h,n})=0,\quad \bA^\pm_{h,n}\times\mathbf{n}=0,\ \text{along}\ \partial\Omega,
\end{equation}
where $\mathbf{n}$ is the unit normal along $\partial \Omega$ (these identities are in the weak and trace sense).  By Friedrichs inequality \cite{Friedrichs:1955}, if $\bA$ satisfies conditions in \eqref{eq:A-div-boundary-condition}, then
\[
\|\bA\|^2_{\bH^1(\Omega)}\leq c_1(\Omega) \|\curl(\bA)\|^2_{\bL^2}.
\]
From \eqref{eq:B_h-norm-bounded}, sequences $\{\bA^\pm_{h,n}\}$, $\{\bA_{h,n}\}$ are bounded in $\bH^1(\Omega)$, thus 
the Rellich compactness theorem \cite{Gilbarg:2001} implies the following convergences (after passing to a subsequence if necessary) 
\begin{equation}\label{eq:weak-strong-convergence}
 \begin{split}
 \bA^\pm_{h,n}, \bA_{h,n} & \longrightarrow \overline{\bA}^\pm_h, \overline{\bA}_{h}\qquad \text{strongly in}\ \bL^2,\\
 \bB^\pm_{h,n}, \bB_{h,n} & \xrightharpoonup{\quad} \overline{\bB}^\pm_h, \overline{\bB}_{h}\qquad \text{weakly in}\ \bL^2.
 \end{split}
\end{equation}

 Suppose that, contrary to the statement of Theorem \ref{thm:main}, the minimizer of \eqref{eq:moffatt-var-problem} is the zero field, i.e. $\overline{\bB}_h=0$ in $\bL^2(\Omega)$. Using the Hodge decomposition of \cite[p. 879]{Cantarella:2000}, on the simply connected $\Omega$, for any $X\in \bC^\infty_{0,\divr}(\Omega)$, we have $Y\in\bL^2(\Omega)$, such that $X=\curl(Y)$. By the weak convergence in \eqref{eq:weak-strong-convergence}
 \[
  (\bA_{h,n},X)_{\bL^2}= (\curl(\bA_{h,n}),Y)_{\bL^2}= (\bB_{h,n},Y)_{\bL^2}\longrightarrow 0,\qquad \text{as}\quad n\to\infty,
 \]
 where in the first identity we used\footnote{$\int_{\Omega} \langle \curl(W),V\rangle d\mathbf{x}=\pm \int_{\partial \Omega} \langle W\times V,\mathbf{n}\rangle d\sigma+\int_{\Omega} \langle W,\curl(V)\rangle d\mathbf{x}$} the boundary condition \eqref{eq:A-div-boundary-condition}.
 
 Since the weak limit of $\bA_{h,n}$ is the zero field, the strong limit is also the zero field, i.e. $\overline{\bA}_h=\overline{\bA}^+_h+\overline{\bA}^-_h=0$. From the computations in \eqref{eq:B_h-helicity} and the helicity invariance under $\Diff_{0}(\Omega,d\mathbf{x})$, we obtain
  \[
 \begin{split}
  \Hel(\bB_{h,n}) & =(\bB_{h,n},\bA_{h,n})_{\bL^2}\longrightarrow (\overline{\bB}_{h},\overline{\bA}_{h})_{\bL^2}= \Hel(\overline{\bB}_{h})=0\\
  \Hel(\bB^\pm_{h,n}) & =(\bB^\pm_{h,n},\bA^\pm_{h,n})_{\bL^2}\longrightarrow (\overline{\bB}^\pm_{h},\overline{\bA}^\pm_{h})_{\bL^2}= \Hel(\overline{\bB}^\pm_{h})=\pm 1,
 \end{split}
 \]
 (because the inner product of the strongly convergent and weakly convergent sequences is convergent in $\R$.)
The strong convergence: $\bA_{h,n}\longrightarrow 0$, implies\footnote{analogously for $\bB^-_{h,n}$} $(\bB^+_{h,n},\bA_{h,n})_{\bL^2}\longrightarrow 0$, but  from \eqref{eq:B_h-helicity}
\[
 (\bB^+_{h,n},\bA_{h,n})_{\bL^2}=(\bB^+_{h,n},\bA^+_{h,n})_{\bL^2}+(\bB^+_{h,n},\bA^-_{h,n})_{\bL^2}=\Hel(\bB^+_{h,n})\longrightarrow 1,
 \]
since  $(\bB^+_{h,n},\bA^-_{h,n})_{\bL^2}=0$ for every $n$. Thus a contradiction to the assumption  $\overline{\bB}_h=0$. 
\end{proof}

 If one could assume a strong $L^2$ convergence $\bB^\pm_{h,n}\longrightarrow \overline{\bB}^\pm_{h}$ in the proof of Theorem \ref{thm:main} then the classical energy--helicity estimate \cite{Arnold:1986}:
  $c_1(\Omega)|\Hel(\bB)|\leq E(\bB)$,
 can be used to argue that $\overline{\bB}_{h}\neq 0$. 
 Alternatively, one can use the asymptotic crossing number estimate of \cite{Freedman:1988}. Therefore, these classical energy estimates suffice to show that there is a gap between the global minimum  (the zero field, in this case) and the minimum over the strong $L^2$ closure of $\bM(\Omega,\mathbf{B}_0)$.  
Theorem \ref{thm:main} shows a stronger fact i.e. that there is a gap between the global minimum and the minimum over the weak $L^2$ closure $\overline{\bM}^w(\Omega,\mathbf{B}_0)$. Clearly, this result applies beyond the Moffatt's relaxation to any other relaxation process which evolves a divergence free field by means of energy-decreasing diffeomorphisms, such as Vallis \cite{Vallis:1989} and Nishiyama \cite{Nishiyama:2002}. In particular, Theorem \ref{thm:main} answers positively the question posed in \cite[p. 417]{Nishiyama:2002}.  In the context of Taylor's conjecture \cite{Taylor:1986} it imples that the Woltjer's  minimizer cannot be reached during the perfectly conductive relaxation phase, if the inital field has a nontrivial topology. In the follow--up work we address this problem in the case of vector fields modeled on links exhibiting a higher order linkage (such as Borromean rings).  

%%%%%%%%%%%%%%%%%%%%%%%%%%%%%%%%%%%%%%%%%%%%%%%%%%%%%%%%%%%%%%%%%%%%%%%%%%%%%%%%
\appendix

\section{Vector fields modeled on a link.}\label{sec:fields-modeled}

We begin by reviewing a definition of the divergence free vector field modeled on a link (c.f. \cite{Freedman:1988}).
Recall, that an $n$--component link in $\mathbb{R}^3$ is a smooth embedding\footnote{we often identify a link $L$ with is image in $\R^3$}  
\[
L:\bigsqcup^n_{k=1} S^1_k \longrightarrow \R^3,\quad L_k=L\bigl|_{S^1_k}\qquad n\geq 1,
\]
$L$ is called a {\em trivial link} if each component $L_k$ is a boundary of an embedded disk, and the disks are disjoint from the link $L$ itself, otherwise the link is called {\em nontrivial} or {\em essential}. A divergence free vector field $\mathbf{V}=\mathbf{V}_L$ is said to be {\em modeled} on a link $L$, \cite{Freedman:1988}, whenever there is a smooth volume preserving embedding
\[
e_L: \bigsqcup^n_{k=1} D^2_k\times S^1_k \longrightarrow \R^3,
\] 
of solid tori (tubes) $\mathcal{T}_k=e_L(D^2_k\times S^1_k)$ into $\R^3$ such that $e_L|_{\{0\}\times S^1_k}=L_k$, i.e. the cores of the tubes are mapped to the link $L$. Further $\mathbf{V}_L$ restricted to each $\mathcal{T}_k$ is given by   
\[
\mathbf{V}_L\bigl|_{\mathcal{T}_k}=(e_L)_\ast(\phi_k(\mathbf{x})\frac{\partial}{\partial t}),\qquad \mathcal{T}_k=e_L(D^2_k\times S^1_k),
\]
where $(\mathbf{x},t)$ are coordinates on $D^2_k\times S^1_k$ and $\phi_k:D^2_k\longrightarrow [0,1]$ is a unit mass bump function vanishing in some neighborhood of $\partial D^2_k$. Observe that in each tube $\mathcal{T}_k$ the vector field  $\mathbf{V}_L$ is the pushforward of $X_k(\mathbf{x},t)=\phi_k(\mathbf{x})\frac{\partial}{\partial t}$ and the circular orbits $\{\mathbf{x}\}\times S^1_k$ of $X_k$ are mapped to the circular orbits $\gamma_k(\mathbf{x},t)$ of $\mathbf{V}_L$ in $\mathcal{T}_k$. Extending $\mathbf{V}_L$ by zero to the entire domain we obtain a smooth vector field vanishing at $\partial \mathcal{T}$ ($\mathcal{T}=\bigcup_k \mathcal{T}_k$), such that $\mathbf{V}_L=\sum^n_{k=1}\mathbf{V}_k$, where $\mathbf{V}_k=\mathbf{V}_L\bigl|_{\mathcal{T}_k}$. As observed in \cite{Freedman:1988}, the Moser's result \cite{Moser:1965} can be used to make the embedding $e_L$ volume preserving and thus $\mathbf{V}_L$ a divergence free field (as $X_k$ is itself divergence free). Further, $e_L$ can be chosen such that $\operatorname{lk}(\gamma_k(\mathbf{x},t),\gamma_k(\mathbf{y},t))$, $\mathbf{x}\neq \mathbf{y}$, i.e. the pairwise linking of orbits of $\mathbf{V}_L$ within each tube $\mathcal{T}_k$ is zero, such $\mathbf{V}_L$  then satisfies 
\begin{equation}\label{eq:V_k-lk=zero}
\Hel(\mathbf{V}_k)=0,\qquad \mathbf{V}_k=\mathbf{V}_L\bigl|_{\mathcal{T}_k}.
\end{equation}

\end{document}